\documentclass[a4paper,12]{article}
\usepackage{amsmath,amsfonts,amssymb,graphicx,cite}
\usepackage[paper=letterpaper,margin=1.2in]{geometry}
\usepackage{booktabs}
\usepackage{amsthm}
\usepackage{multirow}
\usepackage{sidecap}
\usepackage{verbatim}
\usepackage{wasysym}
\usepackage{appendix}

\parskip 0.05in

\newtheorem{proposition}{\bf PROPOSITION}

\newcommand{\setall}{\setcounter{equation}{0}\setcounter{theorem}{0}\setcounter{table}{0}\setcounter{footnote}{0}}

\def\nn{\nonumber}

\begin{document}

\begin{titlepage}
~\\
\vskip 1cm
\centerline{\Large{\bf Probing the Space of Toric Quiver Theories}}
\vskip 1cm
\centerline{{\large
Joseph Hewlett$^{1,2}$ \footnote{\tt joe.cyn@gmail.com}
and
Yang-Hui He${}^{1,3}$ \footnote{\tt hey@maths.ox.ac.uk}
}}
~\\
~\\
{\hspace{-1in}
\scriptsize
\begin{tabular}{ll}
  ${}^1$
  & {\it Rudolf Peierls Centre for Theoretical Physics, Oxford University, 1 Keble Road, OX1 3NP, U.K.}\\
  ${}^2$
  & {\it Perimeter Institute for Theoretical Physics,31 Caroline St. N. Waterloo, Ontario, Canada, N2L 2Y5}\\
  ${}^3$
  & {\it Collegium Mertonense in Academia Oxoniensi, Oxford, OX1 4JD, U.K.}\\
\end{tabular}
}

\vskip 2cm

\begin{abstract}
We demonstrate a practical and efficient method for generating toric Calabi-Yau quiver theories, applicable to both D3 and M2 brane world-volume physics.
A new analytic method is presented at low order parametres and an algorithm for the general case is developed which has polynomial complexity in the number of edges in the quiver. 
Using this algorithm, carefully implemented, we classify the quiver diagram and assign possible superpotentials for various small values of the number of edges and nodes.
We examine some preliminary statistics on this space of toric quiver theories. 
\end{abstract}

\end{titlepage}

\tableofcontents

\setall
\section{Introduction}
The works of \cite{BL,gus} have brought about tremendous interest in the M2-brane, and following \cite{Aharony:2008ug}, where the world-volume theory was realised as a Chern-Simons quiver theory in $(2+1)$-dimensions, much activity ensued (q.v.~e.g.~\cite{Lambert:2008et,Franco:2008um,Benna:2008zy,Hosomichi:2008jb,Hanany:2008qc,Ueda:2008hx,Imamura:2008qs,Martelli:2008rt,Martelli:2008si,Hanany:2008cd,Hanany:2008fj,Davey:2009sr,Amariti:2009rb,Hanany:2009vx,Aganagic:2009zk,Davey:2009qx,BHS,Hanany:2008gx,talk,Papathanasiou:2009en}). Even though nice parallels are drawn between the familiar case of D3-branes probing Calabi-Yau threefold singularities and the present circumstance of M2-branes probing Calabi-Yau fourfold singularities, the latter situation is far less understood. Complications arise in the various analogues of the wealth of properties enjoyed by the D3-brane, such as singularity-resolution in relation to (un)Higgsing, the ``inverse algorithm'' for systematically constructing the world-volume gauge theory,  Seiberg duality, etc. Part of the issue arises from the new possibility of turning on G-fluxes on torsion-cycles in the dual $AdS$ geometry \cite{Aganagic:2009zk,BHS,talk}.

Confronted with these seemingly untamable complexities, an optimistic path, guided by brane configurations, had been trodden. It is by now well-established that the $(3+1)$-dimensional supersymmetric quiver gauge theory of D3-brane probing a toric Calabi-Yau singularity is completely described by a planar, periodic tiling of NS5-branes and D5-branes, or, equivalently, by an auxiliary bi-partite graph on a torus known as a dimer model.  It is suspected that this persists to the case of our $(2+1)$-dimensional quiver Chern-Simons theories, a conjecture fortified by the observation that almost all such theories discovered to date, being of the M2-brane world-volume, descend from parent D3-brane theories. More precisely, take the quiver and superpotential of a world-volume theory arising from a D3-brane probing a Calabi-Yau threefold, add Chern-Simons levels and terms respectively to the graph labeling and the interaction, the new moduli space is computable to be a Calabi-Yau fourfold, apparently rather different, but is expected to be the space which a M2-brane probes. This ``forward algorithm'' is succinctly presented in \cite{Hanany:2009vx}.

Led by this beacon, a systematic, taxonomic study of toric quivers was undertaken in \cite{Hanany:2009vx} and it is in this spirit that we wish to proceed with our current investigations. Our purpose is twofold. First, we wish to continue with \cite{Hanany:2009vx} and exhaustively and progressively classify toric quiver diagrams with superpotential, whereby establishing, for each given number of nodes and edges, all possible parent $(3+1)$-dimensional theories which could descend to $(2+1)$-dimensional Chern-Simon quiver theories. We will see this is facilitated by a tremendous improvement of generating quivers, using an efficient notation and algorithm. Second, we would like to take advantange of this new tool and attempt an exploration of the space of quiver theories along the vein of \cite{Gray:2005sr,Gray:2008yu}, and gather data for our experimental probe of the string geometric landscape.

The paper is organised as follows. 
In the next section we briefly review some topics in graph theory, complexity and toric Calabi-Yau quiver gauge theories. 
This is followed by a restatement of the problem, rephraseded in this language. 
Section 3 offers a new method by which toric Calabi-Yau quiver theories can be generated, a new notation for specifying quiver diagrams and an example of both of these in the form of the complete generation of a low order set of quiver diagrams. 
Section 4 presents an algorithm used for generating complete sets of quiver theories. 
In section 5 we show our results, presenting some complete sets of quiver diagrams and associated superpotentials; as well as some statistics of the total numbers inequivalent quiver theories for some low order parametres. 
Finally, in section 6 we conclude with a discussion of the work, and its futher prospects.

We would like to draw the reader's attention to a beautiful parallel work by \cite{amisteam}, which also classifies toric quiver theories, but from the point of view of brane tilings. 
The two works thus complement each other harmoniously.

\setall
\section{The Taxonomical Problem}
\label{Sec:TP}
We start with introducing the concept of a toric Calabi-Yau quiver theory (TCYQT) in an abstract sense.
We refer the reader to \cite{Douglas:1996sw}, which first brought the study of quiver theories to gauge theories, for details, as well as to \cite{He:2004rn} for a rudimentary review.

\subsection{Some Rudiments on Graphs and Complexity}
First, let us recall some elementary facts on graph theory and algorithmic complexity, which will be crucial to our constructions later.
A \emph{graph} is an ordered pair $(G,E)$, where $G$ is a set of nodes and $E$ is a set of pairs of nodes, which represent the two ends of an edge. In a directed graph, $E$ contains ordered pairs, which tell us the edge direction\cite{hartmann-2005}. Node $i$ is \emph{adjacent} to node $j$ if the directed edge $(i,j)$ is in the graph.

A \emph{path} on a directed graph with nodes $G=\{n_{i},i=\{1..p\}\}$ is an ordered set of edges $\{E=\{e_{j},j=\{1..q\}\}$ where $e_{j} = (n_{k},n_{j})$ and $e_{j+1} = (n_{j},n_{l})$. That is, we follow the arrows on the graph between two given nodes $k \rightarrow l$. 
In a directed graph, each node has an `\emph{indegree}' and `\emph{outdegree}', which are the number of edges entering and leaving the node respectively. 
A \emph{connected} graph has a path between any two nodes and a \emph{cycle} is a path which ends at its starting node.
A graph is \emph{Eulerian} if there exists at least one cycle which visits every edge once and only once \cite{hartmann-2005}. 
A quiver graph can be viewed as an extension of a directed graph and is the quadruple $(G,E,d,R)$ where $d$ is a set of node labels which tell us the gauge groups of the theory and $R$ is a set of algebraic labels for the edges. The choice of $d$ and $R$ is known as the \emph{quiver representation} \cite{quiverrep}.

A major motivation of this investigation is to establish an efficient method of explicitly constructing all toric quivers. 
The best way to analyse the quality of an algorithmic method is to examine its \emph{complexity}, i.e., how the time the algorithm takes to run increases with the size of its arguments.
An algorithm has polynomial complexity \cite{500824} if its running time increases as some power of the size of the argument. 
For example, looping through a list has complexity $O(N)$, where $N$ is the number of terms in the list. 
As a general rule, algorithms with polynomial complexity are known as ``fast'' algorithms and ones with exponential complexity are ``slow'' \cite{500824}. 

There is an important area where graph theory and computer science meet and that is the question of \emph{graph isomorphism}. 
Two graphs are said to be isomorphic, and so belong to the same \emph{graph isomorphism class} if there exists a relabeling of the nodes which makes the graphs identical. 
Currently there is no known simple property of a graph which can be used to check whether it is isomorphic to any other. 
It is an open question as to how efficient an algorithmic solution for checking graph isomorphism can be. 
The current quickest solution to the isomorphism problem is found in \cite{1382666}. For a detailed discussion of the problem see\cite{GIsoDisease}.

\subsection{Toric Calabi-Yau Quiver Gauge Theories}
Prepared with the above elements, we now introduce quiver graphs in the context of brane world-volume gauge theories.
A quiver theory is a pair $(Q,W)$ where $Q$ is a finite {\bf quiver diagram} (directed, labeled and represented in the aforementioned sense) and $W$ is a formal polynomial in the arrows of $Q$, called the {\bf superpotential}.
The finiteness signifies that there is a finite number of nodes, which are to correspond to gauge factors in a gauge theory, and a finite number of directed arrows, which are to correspond to bi-fundamental fields.
The labelings are integers to be assigned to the nodes, and are to denote the rank of the gauge factors as well as possible Chern-Simons levels.
Let $Q$ have $G$ nodes, $E$ arrows and let $W$ have $N_T$ terms.
The arrows will be denoted as $X_{ab}^i$, signifying fields charged under nodes $a$ and $b$, with possible multiple arrows indexed by $i$.
The superpotential is then a polynomial in these fields, each monomial term is a gauge invariant and is thus a product over closed loops in the quiver; furthermore, we require the monomials to be at least cubic since quadratic terms are mass terms which could be integrated out.
For now, we shall take all fields $X_{ab}^i$ to be simply complex numbers, this will be the first of so-called toric conditions, and in particular corresponds to the case of a single brane-probe.
In general, we should promote these to matrices, in which case there are multiple parallel coincident branes.

We shall represent $Q$ by the {\bf incidence matrix} $d$, which is a $G \times E$ matrix of $\pm 1$ and $0$ entries: each row enumerates a node and each column, an arrow, such that an arrow from node $a$ to $b$ will have, in the column representing this arrow, the $a$-th row being $-1$ and the $b$-th row, $+1$; all other entries are 0.
Clearly, the sum over all the rows gives a row of zeros by definition.
Note that this matrix does not capture so-called adjoint fields which go from a node to itself.

To each such a theory $(Q,W)$ one can construct a so-called vacuum moduli space (VMS).
This, by construction, is the geometry which a brane should probe.
The procedure by which one finds the VMS from the pair $(Q,W)$ has been called the ``Forward Algorithm'' \cite{Feng:2000mi}.
In the particular case where the VMS is a toric Calabi-Yau space, this algorithm can be succinctly summarised as manipulation of integer matrices and combinatorics (see the flow-charts on page 7 of \cite{Hanany:2008gx}).
The reverse problem, of geometrically engineering the brane world-volume theory given the vacuum geometry is likewise called the ``Inverse Algorithm''.

Having introduced the above fundamentals of the quiver theories which we will systematically study in this paper, we now move on to placing some mild constraints so as to restrict ourselves to a sub-class which had been of primary interest over the last decade, viz., toric Calabi-Yau conditions.
As we mentioned above, we shall first take all gauge groups to be $U(1)$, whence all fields are complex valued.
Next, we shall require that the VMS, computed from the Forward Algorithm, be Calabi-Yau, this requires that the sum over the columns also gives a column of zeros.
Now, in $(3+1)$-dimensions, this is a consequence of anomaly-cancelation: that for each node, the indegree and outdegree should equal.
Note that this is a sufficient but not necessary condition for the VMS to be Calabi-Yau, however, for simplicity we will henceforth impose it on all our quivers.

In order that the VMS be {\it toric} Calabi-Yau, further constraints are to be met by the superpotential. First, as pointed out in \cite{Feng:2000mi} and dubbed the ``toric condition'', each field must appear {\it exactly } twice and with opposite sign. Second, it is now well-established that toric Calabi-Yau brane theories admit a tiling or dimer description (see \cite{dimer} for excellent reviews); in order to admit a tiling, we must also adhere to a topological condition
\begin{equation}\label{tile}
G - E + N_t = 0  \ .
\end{equation}

\subsection{Summary of Constraints}
We are thus confronted by a classification problem.
Our strategy will be to  generate incidence matrices of size $G \times E$ which represent the quiver diagrams, such that the points 1 and 2 of the following set of conditions are met.
To each such a quiver $Q$ we must generate all possible superpotentials $W$ which satisfy conditions 3 to 6:
\begin{enumerate}
\item Every column in the incidence matrix must sum to zero (incidence condition).
\item Every row in the incidence matrix must sum to zero (Calabi-Yau condition). 
\item Every superpotential must satisfy the toric condition.
\item Every superpotential must contain $x = E-G$ terms (tiling condition).
\item Every term in the superpotential must correspond to a loop in the quiver diagram (gauge invariance).
\item Every term in the superpotential must contain at least three fields.
\end{enumerate}

Computationally, we may proceed as follows.
The initial space of possible matrices are those which contain entries that are $+1$, $-1$ and $0$. Conditions 1 and 2 should be imposed on them, producing a quiver diagram incidence matrix and reducing the space of possible incidence matrices. 
Then  a representative of each graph isomorphism class is required from this reduced set.
The brute force approach to generating possible incidence matrices has at worst complexity $O(3^{E(E-x)})$, as each element in the matrix can be one of three entries.
Next, we also wish to find a constructive way of forming superpotentials. 
The strategy will be to find a way of producing a superpotential using the given quiver as a foundation. 
Here, the brute force algorithm for superpotential generation has complexity $O(2^{2^x})$, as the set of all sets of monomial terms is required.
Clearly, the brute-force method is not feasible and from a computational perspective a fast algorithm is desireable, though due to the difficulty of the problem the emphasis is on finding a simple workable solution, easily testable to eliminate errors. 
We will now devise such an algorithm.

\setall
\section{New Notations and Generations for Toric Quivers}
\label{Sec:Theory}
As we saw above, we need an efficient, or at the least, a practical method of systematically generating toric quiver theories.
Preliminary attempts were made, using standard partition routines, especially implemented on Mathematica$^{\mbox{\scriptsize\textregistered}}$, in \cite{Hanany:2008gx}.
However, there, the taxonomical study was already hindered by the growth rate of the complexity.
In this section, we will start afresh, and introduce a new notation for generating quivers, engendering all from a polygonal shape, and then rather effectively enumerate our desired theories.
\subsection{Collapsing Polygons and Constructing Superpotentials}
We begin with the simple observation that every toric quiver diagram admits at least one Eulerian cycle since every node has equal indegree and outdegree due to the Calabi-Yau condition. 
The original proof of this is due to Fleury\cite{hartmann-2005} and a proof is included in Appendix \ref{App:QP}. 
This realisation that a quiver diagram admits an Eulerian cycle gives a route to a smaller space of possible toric incidence matrices. 
Consider a polygon, with nodes at the corners and clockwise directed edges. 
By pinching together two nodes in the polygon, it is collapsed to a graph with fewer nodes. 
This graph clearly obeys all conditions for a toric Calabi-Yau quiver diagram. 

This above intuition can be formalised.
Suppose we wish to study a toric quiver with $G$ nodes and $E$ arrows, we define $x = E - G$ as an order parametre as was done in \cite{Hanany:2008gx}; this of course, will turn out to be the number of terms $N_t$ in the superpotential by condition \eqref{tile}.
Now, we can define a graph operation $C(i,j)$ on the polygonal quiver as one which pinches the graph between nodes $i$ and $j$. 
After this is applied, the resulting node is labeled simultaneously with both $i$ and $j$. 
The polygonal quiver together with this pinching operation is shown diagrammatically in Figure \ref{f:pinch}.
\begin{figure}[h]
\centering\includegraphics[width=5in]{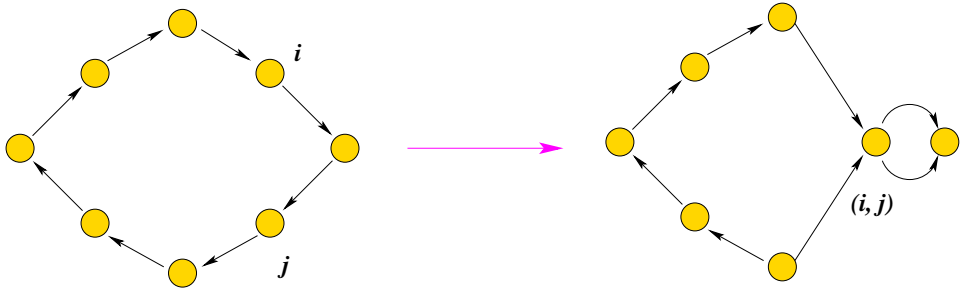}
\caption{{\sf Pinching a cyclically directed polygonal graph by the operation $C(i,j)$.}}
\label{f:pinch}
\end{figure}

It turns out that the above simple concepts already suffice to give us a powerful result.
\begin{proposition}
By drawing an $E$-sided directed polygon with $G$ numbered vertices and operating with $C(i,j)$ $x = E-G$ times, for different pairs $(i,j)$, every unique directed toric quiver diagram is generated.
\end{proposition}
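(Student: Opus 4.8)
The plan is to prove the statement in two directions, establishing that the pinching construction is both \emph{sound} (every output is a toric quiver diagram) and \emph{complete} (every toric quiver diagram arises). The soundness direction is the quicker half and amounts to checking that the operation $C(i,j)$ preserves conditions 1 and 2. I would start from the observation that the directed $E$-gon trivially satisfies both: each vertex has indegree and outdegree equal to one, so every row of its incidence matrix sums to zero (Calabi--Yau condition), while each edge has exactly one head and one tail, so every column sums to zero (incidence condition). I would then show that a single pinch $C(i,j)$ preserves these properties: merging nodes $i$ and $j$ replaces two rows of the incidence matrix by their sum, which still vanishes, while leaving every column untouched; connectedness is likewise preserved. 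Since after $x = E-G$ pinches we are left with $E$ edges and $E-x = G$ nodes, the output is a legitimate toric quiver diagram on $G$ nodes and $E$ edges.

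The completeness direction carries the real content, and the key tool is the earlier fact that every toric quiver diagram admits an Eulerian cycle. Given an arbitrary (connected) toric quiver $Q$ with $G$ nodes and $E$ edges, I would fix one Eulerian cycle and \emph{unfold} it: traversing the cycle produces a closed sequence of nodes $v_0 \to v_1 \to \cdots \to v_{E-1} \to v_0$ in which every edge of $Q$ appears exactly once as a consecutive step. This sequence defines a map $\varphi$ from the vertices $\{0,1,\dots,E-1\}$ of the directed $E$-gon to the nodes of $Q$, sending $m \mapsto v_m$. By construction $\varphi$ carries the polygon edge $m \to m+1$ to the $Q$-edge $v_m \to v_{m+1}$, and because the Eulerian cycle uses each edge precisely once, $\varphi$ is a bijection on edges that respects heads and tails.

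I would then read off the required pinches directly from the fibres of $\varphi$. If a node $u$ of $Q$ is visited $k_u$ times along the Eulerian cycle, its preimage under $\varphi$ consists of $k_u$ polygon vertices, and merging them into a single node requires $k_u - 1$ successive applications of $C$. Summing over all nodes, the total number of pinches is $\sum_u (k_u - 1) = \big(\sum_u k_u\big) - G = E - G = x$, since the number of visits to $u$ equals its outdegree and these outdegrees sum to $E$. Performing exactly these pinches collapses the polygon onto $Q$: the edge bijection of the previous step guarantees that, after identification, each edge has precisely the source and target prescribed by $Q$. Hence $Q$ is generated, completing the argument.

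The main obstacle I anticipate is not the counting but the bookkeeping needed to make the reconstruction genuinely well-defined. Three points deserve care. First, one must check that the $x$ pinches can always be ordered so that each $C(i,j)$ acts on vertices still present (not already absorbed into an earlier merge); choosing, for each fibre, to repeatedly pinch the current merged node against the next preimage handles this. Second, the construction must correctly reproduce \emph{multiple} edges between a pair of nodes and, where $Q$ contains adjoint (self-loop) fields, pinches of polygon vertices that are adjacent — the unfolding accommodates both automatically, since a repeated node and a self-step in the Eulerian cycle are exactly these cases. Third, the word \emph{unique} is best read up to graph isomorphism: different Eulerian cycles, or different vertex labellings of the starting polygon, may yield different pinch sequences producing isomorphic quivers, so completeness only requires that \emph{some} choice reproduces each isomorphism class, which the argument above supplies.
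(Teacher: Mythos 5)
Your proposal is correct and follows essentially the same route as the paper's own proof: soundness by checking that each pinch preserves the row- and column-sum conditions of the incidence matrix, and completeness by invoking the Eulerian cycle and identifying polygon vertices along it (the paper phrases this as node-splitting, i.e.\ the inverse of your unfolding, but the content is identical). Your version is somewhat more careful than the paper's, in particular the explicit count $\sum_u (k_u - 1) = E - G = x$ of the required pinches and the remarks on ordering, multiple edges, and self-loops, all of which the paper leaves implicit.
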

\begin{proof}:
Let the set of toric quiver diagrams with $E-G = x$ be $Qu^{x}$, the set of directed polygon graphs be $P$ and the set of graphs formed by acting on $P$ with $C(i,j)$ $x$ times be denoted $CP^{x}$.
We now prove that $Qu^x$ equals $CP^x$.
In one direction, consider an $E$-sided directed polygon. If $C(i,j)$ is applied to nodes $i$ and $j$ the resultant node has two edges entering and two edges leaving the combined node which we denote as $(i,j)$. Repeated application of $C((i,j),k)$ on the resultant node adds another edge entering and leaving each time. So the graph still satisfies the Calabi-Yau condition. 
Applying $C(i,j)$ $x$ times with different $(i,j)$ constructs a directed graph which satisfies the Calabi-Yau condition and still prodces toric quiver diagrams, with $E$ edges and $E-x = G$ nodes. Therefore $CP^{x} \subseteq Qu^{x}$.

Next, consider a quiver, and walk an Eulerian cycle through it. Every time we reach a node with multiple edges entering and leaving we split the node. 
The edges we travel along are now the only edges attached to the new node. 
This is the inverse operation to $C(i,j)$. 
This process forms a directed graph where $E=G$, and constitutes a member of $P$. 
Thus, $Qu^{x} \subseteq CP^{x}$.
Therefore,
\begin{equation}\nn
 CP^{x} \subseteq Qu^{x} \text{ }\&\text{ } Qu^{x} \subseteq CP^{x} 
\Rightarrow  Qu^{x} \equiv CP^{x} 
\end{equation}
\qedhere
\end{proof}
A map $\pi:P \mapsto Qu^{x}$ can now be defined as a way of producing toric quivers. 
This is a way of guaranteeing a very large reduction in the space of possible incidence matrices.

Having constructed the toric quiver $Q$, we now assign the superpotential $W$.
\begin{proposition}
Toric superpotentials can be constructed by decomposing the quiver diagram into disjoint loops to generate terms and then matching up the collections of terms so that the total number of terms is $x$.
\end{proposition}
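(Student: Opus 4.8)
The plan is to establish the statement in two directions, so that the phrase ``can be constructed'' is fully justified: that the procedure is \emph{complete} (every admissible superpotential arises from it) and \emph{sound} (once the term count is arranged, it always returns a legitimate toric superpotential). Throughout I fix the quiver $Q$ with its $E$ arrows and $G$ nodes obeying conditions 1 and 2, and I work directly with conditions 3--6. First I would extract the structural consequence of the toric condition. Since each field $X_{ab}^i$ occurs exactly twice in $W$ and with opposite signs, the monomials of $W$ sort into a positively signed collection $\mathcal{L}_+$ and a negatively signed collection $\mathcal{L}_-$ such that every arrow appears exactly once among the terms of $\mathcal{L}_+$ and exactly once among those of $\mathcal{L}_-$. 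By gauge invariance (condition 5) each monomial is a product over a closed loop in $Q$, and by condition 6 each such loop has length at least three. Because no arrow repeats within $\mathcal{L}_+$, its loops are edge-disjoint and jointly use every arrow exactly once; that is, $\mathcal{L}_+$ is a decomposition of the arrow set of $Q$ into disjoint directed loops, and likewise for $\mathcal{L}_-$. This is exactly the ``decomposition into disjoint loops'' of the statement, and it shows any toric superpotential is the signed sum of two such decompositions, establishing completeness.

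For soundness I would run this backwards. Given any two decompositions $\mathcal{L}_+$ and $\mathcal{L}_-$ of the arrows of $Q$ into disjoint directed loops each of length at least three, I set
\begin{equation}\nn
W = \sum_{\ell \in \mathcal{L}_+} \prod_{X \in \ell} X \; - \; \sum_{\ell \in \mathcal{L}_-} \prod_{X \in \ell} X \ .
\end{equation}
Each arrow lies in exactly one loop of $\mathcal{L}_+$ and one of $\mathcal{L}_-$, hence appears exactly twice in $W$ with opposite signs, so condition 3 holds; conditions 5 and 6 are immediate since every term is a genuine closed loop of length at least three. The existence of at least one such decomposition is guaranteed by the balanced-degree (Calabi--Yau) structure used already to produce an Eulerian cycle: a directed graph with equal indegree and outdegree at every node admits an edge-disjoint decomposition into directed cycles, and the freedom in pairing the incoming with the outgoing arrows at each node generates the whole family of candidate decompositions. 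Thus any matched pair yields a $W$ satisfying conditions 3, 5 and 6, leaving only the tiling condition 4.

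The main obstacle, and the real content of the ``matching up'' step, is the term count. Writing $N_t = |\mathcal{L}_+| + |\mathcal{L}_-|$ for the number of monomials produced, condition 4 demands $N_t = x = E-G$, which is precisely the tiling relation \eqref{tile}. An arbitrary pair of decompositions need not meet this, so the construction must pair the collections selectively, retaining only those $(\mathcal{L}_+,\mathcal{L}_-)$ whose loop counts sum to $x$. I would justify that this is the correct bookkeeping via the Euler characteristic of the associated bipartite graph, whose vertices are the $|\mathcal{L}_+|+|\mathcal{L}_-|$ loops (the positive ones white, the negative ones black), whose edges are the $E$ arrows, each joining the unique positive and negative loop containing it, and whose faces are the $G$ nodes: the identity $N_t - E + G = 0$ is then exactly the genus-one (torus) condition \eqref{tile}. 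The delicate points I expect to have to argue carefully are that admissible matchings with $N_t = x$ actually exist for the quivers of interest, and that the loops in each collection can always be taken of length at least three rather than collapsing to excluded self-loops or quadratic mass terms; these, rather than the sign-splitting, are where the genuine effort lies.
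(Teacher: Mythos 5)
Your overall route is the same as the paper's: the toric condition splits the monomials of $W$ into a positive and a negative collection, each of which is a decomposition of the arrow set into edge-disjoint directed loops, and superpotentials are produced by pairing two such collections whose loop counts add up to $x$. Your completeness direction and the Euler-characteristic remark (terms as vertices, arrows as edges, gauge nodes as faces of a torus tiling) are elaborations beyond the paper's terser argument, and they are correct as far as they go.

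However, your soundness claim contains a genuine gap: you assert that \emph{any} pair $(\mathcal{L}_+,\mathcal{L}_-)$ with $|\mathcal{L}_+|+|\mathcal{L}_-|=x$ ``yields a $W$ satisfying conditions 3, 5 and 6,'' but this fails whenever the two collections contain a common term. If a loop of $\mathcal{L}_+$ and a loop of $\mathcal{L}_-$ traverse the same set of arrows, then in the Abelian case (fields are commuting complex numbers) the two monomials are identical and cancel in $W$: the fields of that loop then appear zero times rather than twice (condition 3 fails) and the term count drops below $x$ (condition 4 fails), in the extreme case giving $W\equiv 0$. This is not a pathological corner: it is exactly the phenomenon the paper records for every $x=2$ quiver (both collections are single Eulerian cycles, hence identical monomials, hence $W=0$) and for the red-node diagrams in the classification. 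The paper's own proof imposes precisely the missing requirement --- ``on the condition that the two have no common terms we have made a superpotential with $x$ terms'' --- and its implementation even carries a dedicated \emph{cancelcheck} routine for it. Your ``matching up'' step therefore needs this disjointness-of-terms condition in addition to the counting condition $|\mathcal{L}_+|+|\mathcal{L}_-|=x$; with that added, your argument aligns with the paper's.
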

\begin{proof}
Drawing all loops which correspond to terms of one sign on top of the quiver diagram creates a collection of loops which contains every edge once. 
Thus every set of disjoint loops which covers the quiver diagram represents a collection of terms.
We can then take two collections, one with $n$ terms and one with $x-n$ and subtract one from the other. 
On the condition that the two have no common terms we have made a superpotential with $x$ terms. 
Doing this for all possible combinations of term collections will generate all toric superpotentials. 
\qedhere
\end{proof}
The term collections are constructed by finding all ways of splitting the quiver diagram up into disjoint cycles. 
Consider walking through the quiver, there is a choice of where to go only at the contracted nodes.
We define the concept of \emph{edge assignments} at each node by a set of rules $EA_{i} = \{(E_{a},E_{b}): E_{a} \text{ enters $i$ and }E_{b} \text{ leaves $i$}\}$. 
By constructing all $EA_{i} \text{ }\forall \text{ $i$: indegree($i$)} > 1$ we produce all sets of disjoint loops of the quiver diagram. 

To illustrate, let us use a concrete example.
Let us study the $(G,E) = (3,6)$ quiver diagram.
Note that we will henceforth use this pair notation to denote quivers.
We have drawn the $(3,6)$ quiver in Figure \ref{Qu36S} and now will produce all toric superpotentials for it.
Of course, this is only a demonstrative example, since there will be $x= 6-3=3$ terms in the superpotential and we will see there are non-minimal cycles.
In the actual physics, we may need to promote the node ranks to non-Abelian and shuffle traces amongst the operators.
\begin{figure}[h!]
\centering  \includegraphics[width=0.4\textwidth]{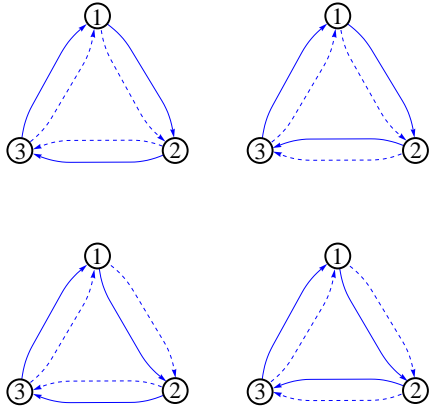}
\caption{{\sf For the $(G,E) = (3,6)$ quiver, we construct all two term collections. One term is the product of the fields in the dashed loop, the other is the product of fields in the solid loop. These will then appear in the superpotential, with opposite signs.}}
\label{Qu36S}
\end{figure}
All sets of two loops that can be used to form the gauge invariant terms are shown in 
Figure \ref{Qu36S}. 
As well as these loops, we also have the term made from an Eulerian cycle of the quiver, namely
$T_{all} = X_{12}^{(1)}X_{12}^{(2)}X_{23}^{(1)}X_{23}^{(2)}X_{31}^{(1)}X_{31}^{(2)}$, looped over all the arrows.
The term collections are then matched up to give $x=3$ terms and hence construct all superpotentials:
\begin{eqnarray}
W_{1}&=&X_{12}^{(1)}X_{23}^{(1)}X_{31}^{(1)}+X_{12}^{(2)}X_{23}^{(2)}X_{31}^{(2)}- T_{all} \nonumber \\
W_{2}&=&X_{12}^{(1)}X_{23}^{(2)}X_{31}^{(1)}+X_{12}^{(2)}X_{23}^{(1)}X_{31}^{(2)}- T_{all} \nonumber \\
W_{3}&=&X_{12}^{(2)}X_{23}^{(1)}X_{31}^{(1)}+X_{12}^{(1)}X_{23}^{(2)}X_{31}^{(2)}- T_{all} \nn \\
W_{4}&=&X_{12}^{(2)}X_{23}^{(2)}X_{31}^{(1)}+X_{12}^{(1)}X_{23}^{(1)}X_{31}^{(2)}- T_{all} \ .
\end{eqnarray}

Indeed, the above combinations are exhaustive and we will perform similar such partitions for all the toric quivers which will be constructed later.
Now, a limit on the quivers which admit toric superpotentials can also be found. 
The quiver must be able to be decomposed into at least $\text{ceil}(\frac{x}{2})$ loops
(we use the standard notation that ceil($x$) rounds $x$ up to the nearest integer and floor($x$) rounds $x$ down), as this is the minimum number of terms required from one term collection to give $x$ terms. Given that there must be at least three fields in each term, the maximum number of terms is $E/3$. Thus,
\begin{equation}
\label{Eqn:LL}
E \geq 3\text{ceil}(\frac{x}{2}) \ .
\end{equation}

\subsection{AB notation}
Now, we introduced the map $\pi:P \mapsto Qu^{x}$ above which explicitly constructs toric quiver diagrams from polygonal contractions. 
To specify the maps we propose a notation which depends as little as possible on labeling the nodes and which reflects the symmetry properties of the quiver. 
This will turn out to be important for redundancy elimination.

We take a walk through a member of $P$ to build the quiver and make the following definitions:
\begin{itemize}
\item Let $C(a) := C(i,i+a)$ where $i$ is the node at which we are currently stopping and $a$ is the number of steps ahead.
\item Let $W(b)$ be the operation of moving along $b$ edges through the nodes in their labeled order from the current position. 
\item We call the process $C(a)$ which contracts two nodes which have already been contracted as a \emph{redundant} contraction.
\item We call a node which has been contracted with another as a \emph{c-node}.
\end{itemize}
We can now define our walk as follows.
Due to the nodes being all labeled by 1 in the toric Abelian case, the walk can begin at any node that will be contracted. 
First, apply $C(a_{1})$ and then $W(b_{1})$, where $a_{1}$ is the number of edges in the first contraction and $b_{1}$ is the number of edges to the next node we wish to contract. 
Then, apply $C(a_{2})$, followed by $W(b_{2})$, and so on. 
Stop once $x$ contractions have been performed. 
Finally apply $W(b_{x})$ to return to the starting node. 
Thus any quiver diagram can be represented by two ordered sets of numbers:
\begin{eqnarray}
\nonumber
A &=& \{a_{i}: i={1,..,x}\}\\
B &=& \{b_{i}: i={1,..,x}\} \ ,
\end{eqnarray} 
where the $a_{i}$ in the $A$ define the contractions in the order in which they are applied, and the $b_{i}$ in the $B$ define the walks between contractions. 
Therefore, a quiver can be described by applying elements from the $A$ and $B$ in turn. That is, using $a_{1}$ followed by $b_{1}$ then $a_{2}$ and so on. 
Thus a general AB is a constructive map:
\begin{displaymath}
(A,B): P\mapsto Qu^{x} \ .
\end{displaymath}

\begin{figure}[ht]
\centering \includegraphics[width=0.4\textwidth]{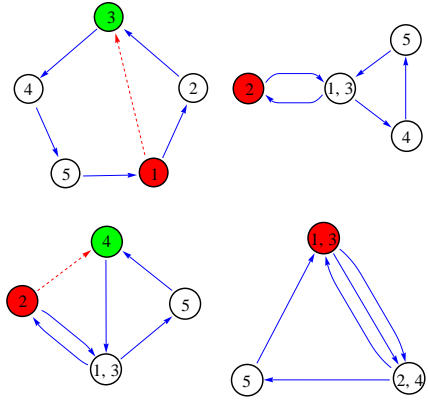}
\caption{\sf The action of the constructive AB maps on the polygonal directed quiver to a toric quiver. The red node is the current location of the walker, the left hand column represents the application of the contractions ($a_{i}$), from the red node to the green node. The right hand column shows the state after the application of the walks ($b_{i}$). The diagram shows the application of $A=\{2,2\},B=\{1,4\}$.}
\label{Condia}
\end{figure}

This AB notation has a number of useful symmetry properties which are due to A and B containing relative node positions on the quiver. 
Rotating the quiver amounts to a cyclic permutation of the elements of both A and B and reflecting the quiver diagram is also manifest, but in a more complicated way.
We refer the reader to Appendix \ref{App:QP} for more details on these symmetries.
It is important to note that the notation is much more efficient than an incidence matrix, requiring only $2x$ numbers to label a quiver. 
Using these concepts and this notation we can restrict the space of possible incidence matrices and produce an algorithm much more efficient than simple brute force.

\subsection{Warm-up: The $x=2$ Problem}
\label{Sec:TBP}
As an example for applying these above methods, let us classify, as a first example, the $x=2$ quiver diagrams, that is, all quiver diagrams formed by two contractions from some polygon. 
These are the two-term quiver theories studied in \cite{Hanany:2008gx} and are quite simple quivers with $A=\{a_{1},a_{2}\},B=\{b,E-b\}$.
With the above ideas, it should be possible to analyse this case easily. 
For simplicity only quivers with no self-adjoint loops are considered, as $x=2$ quivers with self-adjoint loops are trivial extensions of $x=0$ and $x=1$ quivers, i.e., we just add a loop to a single node for the $x=1$ quivers, or two loops for the $x=0$.

Each $x=2$ quiver is made by applying some contraction operation to an $x=1$ quiver, so we must consider the possible $x=1$ quiver diagrams first.
We need to work out which quivers are unique. 
One immediately sees that applying $C(a)$, where $a$ is the node spacing, will give us an $x=1$ quiver, however if $a>\text{floor}(E/2)$ we can flip this diagram to become a  quiver with $a' = a-\text{floor}(E/2)$. 
Therefore all the $x=1$ quivers are formed by applying the operation $C(a_{1}) \text{ for } a_{1}=2,3,...,\text{ floor}(E/2)$. 
To produce the $x=2$ quivers we first apply the walk operation $W(b_{1})$, to move to another node in the quiver. Then another contraction $C(a_{2})$ is applied . 
The values of $b_{1}$ and $a_{2}$ which can be applied to produce non-isomorphic quivers split into a number of regimes, based upon their values.

\begin{figure}[h]
\centering \includegraphics[width=0.5\textwidth]{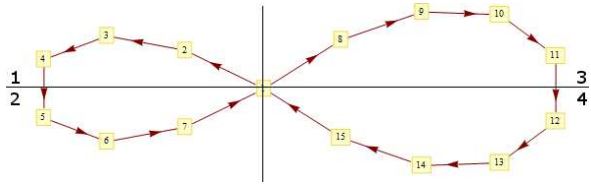}
\caption{{\sf The solution to the warm-up problem of $x=2$, by splitting up into regions 1 to 4.}}
\label{bprob}
\end{figure}
We present the procedure graphically in Figure \ref{bprob}, where we have labelled 4 regions.
The idea for solving is that we will systematically contract all nodes with each other, but single out those which will produce isomorphic quiver diagrams. We first look within the first quarter of the general $x=1$ diagram above, confining contractions to this quarter. We then match up all the contractions across quarters. As the quarters cut the diagram along potential symmetry lines, we can use this view of the diagram to easily work out which contractions will be isomorphic to one another. Due to the symmetry conditions, if we perform $C(2,6)$ on the above diagram, we have also performed $C(3,7)$. So we must factor this condition into our systematic categorisation of the possibilities.
 
The matchings we choose to do go as follows: 
In region 1 we match up with regions 1 and 2. 
In region 2 we match with region 3 (this is isomorphic to matching region 1 with 3 or 4, and region 2 with 4).
In region 3 we match with regions 3 and 4.

By working through these cases we will discover all the possible values of $AB$ which let us create quiver diagrams not isomorphic to one another. We also note the no self-adjoint loop condition, this has a number of effects. 
First, we restrict the members of $A$ to be strictly greater than one. 
Next, we remember in computing the conditions below that we do not want to create a self-adjoint loop by pulling any node which is one edge away from another together. 
Specifically, this can happen for $x>1$ quivers as applying contractions when there is a loop already in the diagram can lead to a situation where the $a$ of the contraction is greater than one, but the distance of the node from the multiply connected node is only one. For the case below this is easy to avoid. 
However, more complex cases would prove a very real problem.

First we have from the $x=1$ result that $2 \leq a_{1} < floor(E/2)$, and let $b_{m} = a + floor((E-a)/2)$ to simplify notation, then let $0 < b \leq b_{m}$, and so:

if $a_{1} \neq E/2$ then we find following conditions on $a_{2}$ for a given $b$ and $a_{1}$:
\begin{eqnarray}
\nn &\text{if } b = 0 \text{ then } 2 \leq a_{2} < floor(a_{1}/2)\\
\nn &\text{if } 0 < b < floor(a_{1}/2)\text{ then } 2 \leq a_{2} < a_{1}-2b\\ 
\nn &\text{if } floor(a_{1}/2) \leq b < a_{1}\text{ then } a_{1}-b+1 \leq a_{2} < b_{m}-b\\
\nn &\text{if } b = a_{1}\text{ then } 2 \leq a_{2} < b_{m}-b\\
&\text{if } b > a_{1}\text{ then } b_{m}-b \leq a_{2} < E-2b+a \ ;
\end{eqnarray}

if $a_{1} = E/2$ then we gain another line of symmetry and the only matchings required are:
region 1 with regions 1 and 2;
region 2 with region 3.
\begin{eqnarray}
\nn &\text{if } b = 0: 2 \leq a_{2} < floor(a_{1}/2)\\
\nn &\text{if } 0 < b < floor(a_{1}/2): 2 \leq a_{2} < a_{1}-2b\\ 
\label{noadj}
&\text{if } floor(a_{1}/2) \leq b < a_{1}: a-b+1 \leq a_{2} < b_{m} + a - b  \ .
\end{eqnarray}

Finally, we consider that we could have constructed isomorphic quivers which have the same loops but in the opposite order, i.e., with $a_{1}$ and $a_{2}$ swapped. In order to make sure this does not occur, we restrict the value of $a_{2}$, so that the loops always have a definite order depending on their size.
\begin{eqnarray}
\nn
&a_{2} \leq a_{1}\\
&a_{2} > 1 \ .
\end{eqnarray}

This exhaustively produces all possible quiver diagrams for $x=2$.

It would be incredibly complex to conduct the same kind of analysis for larger $x$. Due to this difficulty in implementing this kind of constructive solution for arbitrary $x$, the next step is to try an algorithmic solution, and use a graph isomorphism test to find quiver diagrams representative of their isomorphism class. 
To this we now turn.

\setall
\section{The Algorithm for Generating Toric Quiver Theories}\label{Sec:Al}
Our warm-up example clearly lends itself to an algorithmic approach.
In this section, we briefly outline this algorithmic procedure which we device to systematically generate the toric quivers and all possible associated superpotentials.
The method has been implemented in C and is presently being organised into a convenient software.
This section, together with the companion appenix, are rather technical and the readers interested only in the results can safely skip the ensuing text and move onto the next section where the results of our classification will be presented.

\begin{figure}[!ht]
\centering \includegraphics[width=0.35\textwidth]{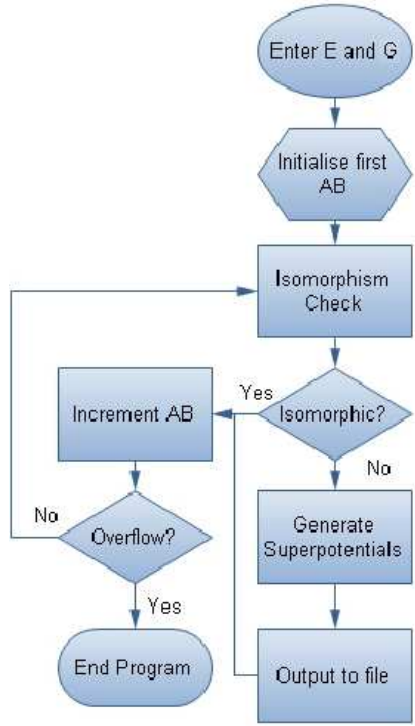}
\caption{\small Flowchart of the basic structure, showing the main processing sections and decisions of our algorithm for generating toric quiver theories.}
\label{QTflow}
\end{figure}

Our methodology is as follows.
We will generate all the toric quiver diagrams by finding all possible AB, and test them to pick a representative of each graph isomorphism class. 
As AB is represented by a list of numbers we treat the problem as though the AB are numbers in base $E$ that can be counted through;
digits greater than $E$ need not be considered as we would just loop around the whole diagram. 
Next, given a quiver diagram we wish to check if it is isomorphic to any quiver diagram which can be written as a ``smaller'' AB, that is, a smaller base $E$ number. 
The `smallest' AB is then kept as the representative of the class. 
With this quiver diagram we find the possible superpotentials using the method detailed in Section \ref{Sec:Theory}. 
The algorithm terminates when the AB, (interpreted as a base $E$ number), overflows. 
The flowchart Figure \ref{QTflow} sets this out in detail.

\subsection{Counting}
When implementing, the isomorphism test should check as few quivers as possible to maximise efficency. 
The AB can be constrained to reduce the number that must be checked. 
To do this, the so-called `\emph{moduli rules}' are used. 
These rules constrain the AB so the operation it represents never crosses the starting node and are given by imposing
\begin{displaymath}
\sum b_{i} + a_{i+1} < E \ .
\end{displaymath}
Moduli rules are used instead of restricting using symmetry since using symmetry leads to problems with redundancy and is more computationally complex. 
We can also place limits on the members of $A$ and $B$. 
We require all $a_{i} \geq 1$ since $C(0)$ is an identity and that $b_{i} \geq 1$ to eliminate redundancy; again see Appendix \ref{App:QP} for proofs of the above statements. 
With these rules it is a simple matter to write the code which increments the AB, treating it as a base $E$ number, and obeys these conditions. 
To implement the counting part of the algorithm a number of routines are used; these are detailed in the Counting section of Appendix \ref{App:alg}.

\subsection{Isomorphism}
A simple graph isomorphism test would require the current quiver diagram to be checked against all quiver diagrams found so far, so a method that can test a single possible quiver at a time is preferable. 
A solution is to construct all possible ways of writing the quiver diagram in AB notation. 
This is done by walking all Eulerian cycles and counting the edges between c-nodes in these cycles to find all the AB. 
It is basically the reverse process of collapsing the polygon with the AB; we refer the reader to appendix \ref{App:AT}. 
Walks need only start from c-nodes, since $C(a)$ is always applied first in an AB, increasing the efficiency. In fact, in Appendix \ref{App:AT} we show that the routine has complexity $O(E^{ax})$.
The``smallest'' way of writing the quiver diagram is then taken as the representative.
We detail the routines in the Isomorphism check section of Appendix \ref{App:alg}.

\subsection{Superpotential Generation}
In order to produce the superpotentials we first identify all c-nodes in the quiver diagram and then generate all edge assigments. 
For each edge assignment, we trace through the loops which it forms in the quiver, and then generate the gauge invariant terms. 
This is precisely the method discussed in section \ref{Sec:Theory}. 
As the order of fields within a term is irrelevant they are ordered in a consistent way to check if term collections share terms. 
Then the term collections are matched up to form superpotentials with the correct number of terms.
The sub--routines are explained in the Superpotential Generation Section of Appendix \ref{App:alg}.

\setall
\section{Classification Results}\label{Sec:Res}
In this section, we put the above theoretical algorithms to practice and present the complete quivers for some low values of the order parametres.
Many quivers have already eluded current methods of attack due to their complexity and our algorithm can generate them efficiently.
Moreover, we will initiate an exploration of this space, consisting of many quivers new to the literature, and see what structures we can unravel.  

\subsection{Quiver Diagrams for Low Order Parameters}
We now present some of the low order quiver diagrams for $x=0,1,2,3$. The $x=0$ and $x=1$ quiver diagrams are trivially an $E$ sided polygon and single contractions $A={a},B={E}$, so we need not draw them explicitly.
Moreover, we will for now focus on what we call {\bf base quiver diagrams}; that is, diagrams without self-adjoint loops.
As we mentioned above, these adjoints are not captured by the incidence matrix because they would be ambiguously represented by a column of zero. 
One can be put these in by analysing candidates for the superpotentials as was done in \cite{Hanany:2008gx}.

The $(G,E)$ equaling to $(3,5)$, $(4,6)$ and $(5,7)$ quiver diagrams are the lowest order $x=2$ diagrams which are not trivially found. 
We see from inequality \eqref{Eqn:LL} that if $x=2$, then $E \geq 3$. 
Requiring for now that there are diagrams with no self-adjoint loops takes this limit up to $E=5$. 
The $(3,5)$ and $(4,6)$ diagrams match those analysed by \cite{Hanany:2008gx}.

\begin{figure}[!ht]
\centering
\includegraphics[trim = 5mm 5mm 5mm 5mm, clip, width=0.40\textwidth]{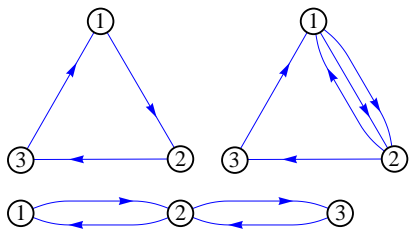}
\caption{{\sf $(G,E)=(3,5)$ Quiver Diagrams, self adjoint loops are implied in the graphs with edges fewer than $E=5$.}}
\label{Qu35}
\end{figure}
Diagrams with self adjoint loops are not shown below (except for the $(3,5)$ case) as they are the trivial $x=1$ and $x=0$ diagrams with one (or more) self adjoint loop(s) attached to some node(s). There are three $(3,5)$ base quiver diagrams shown in Figure \ref{Qu35}: a single quiver diagram with no self-adjoint loops, one with a single self-adjoint loop and one with two self-adjoint loops. There are five $(4,6)$ base quiver diagrams shown in Figure \ref{Qu46} (allowing for adjoints there would be 7). 
In Figure \ref{Qu57} the seven $(5,7)$ base quiver diagrams are presented (allowing adjoints there are ten $(5,7)$ quivers).

\begin{figure}[!ht]
\centering
 \includegraphics[trim = 20mm 20mm 20mm 20mm, clip, width=0.50\textwidth]{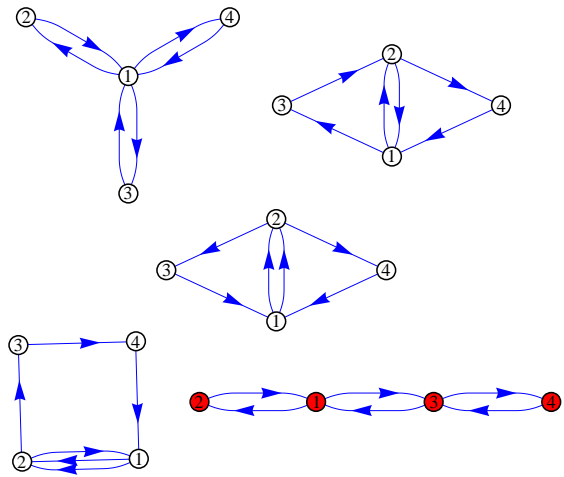}
\caption{{\sf $(G,E)=(4,6)$ base Quiver Diagrams, with no self adjoint loops. The diagram with red nodes does not admit a toric superpotential even if fields are not constrained to only take complex values.}}
\label{Qu46}
\end{figure}

\begin{figure}[!ht]
 \centering 
\includegraphics[trim = 32mm 32mm 20mm 20mm, clip, width=0.50\textwidth]{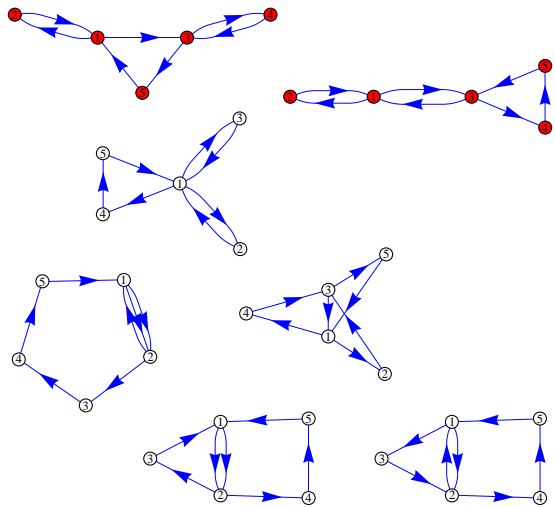}
\caption{{\sf All $(G,E)=(5,7)$ quiver diagrams with no self-adjoint loops. The diagrams with red nodes have vanishing toric superpotentials even if the fields are not constrained to only take complex values.}}
\label{Qu57}
\end{figure}

Next, let us move on to $x=3$, and present some low order complete sets of quiver diagrams. 
This has not appeared in the literature before.
We find that $E \geq 6$ from inequality (\ref{Eqn:LL}) and so present the three $(3,6)$ base quiver diagrams with no self adjoint loops in Figure \ref{Qu36}. 
The quivers with self-adjoint loops are those in Figure \ref{Qu35}, but with loops attached, giving six quiver diagrams in total. 
In Figure\ref{Qu47} the five $(4,7)$ quiver diagrams with no self-adjoint loops are presented. 
The single self adjoint loop base quiver diagrams have already been shown in Figure \ref{Qu46}.  
\begin{figure}[!ht]
\centering
\includegraphics[trim = 5mm 5mm 5mm 5mm, clip, width=0.40\textwidth]{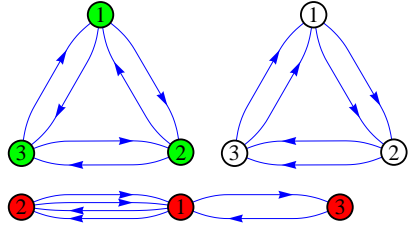}
\caption{{\sf $(G,E)=(3,6)$ Quiver Diagrams, with no self adjoint loops. The diagram with green loops has a single toric superpotential and the diagram with red nodes admits no superpotentials.}}
\label{Qu36}
\end{figure} 

\begin{figure}[!ht]
\centering
\includegraphics[trim = 10mm 20mm 20mm 20mm, clip, width=0.50\textwidth]{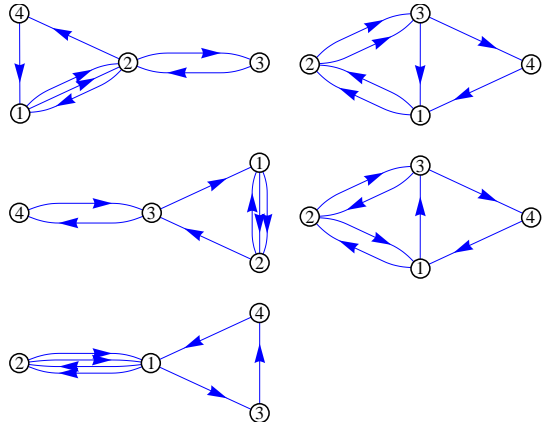}
\caption{{ \sf $(G,E)=(4,7)$ Quiver Diagrams, with no self adjoint loops.}}
\label{Qu47}
\end{figure}

To demonstrate that the algorithm is still efficient for higher values, let us move on to an example set for $x=4$, with $(G,E) = (4,8)$, which again is new to the literature. We present all base quivers in Figure \ref{Qu48} and there is a total of 16. From this list we recognise, among others, the 8-arrowed quiver for the cone over the zeroth Hirzebruch surface.
\begin{figure}[!ht]
\centering
\includegraphics[trim = 10mm 0mm 0mm 0mm, clip, width=6in]{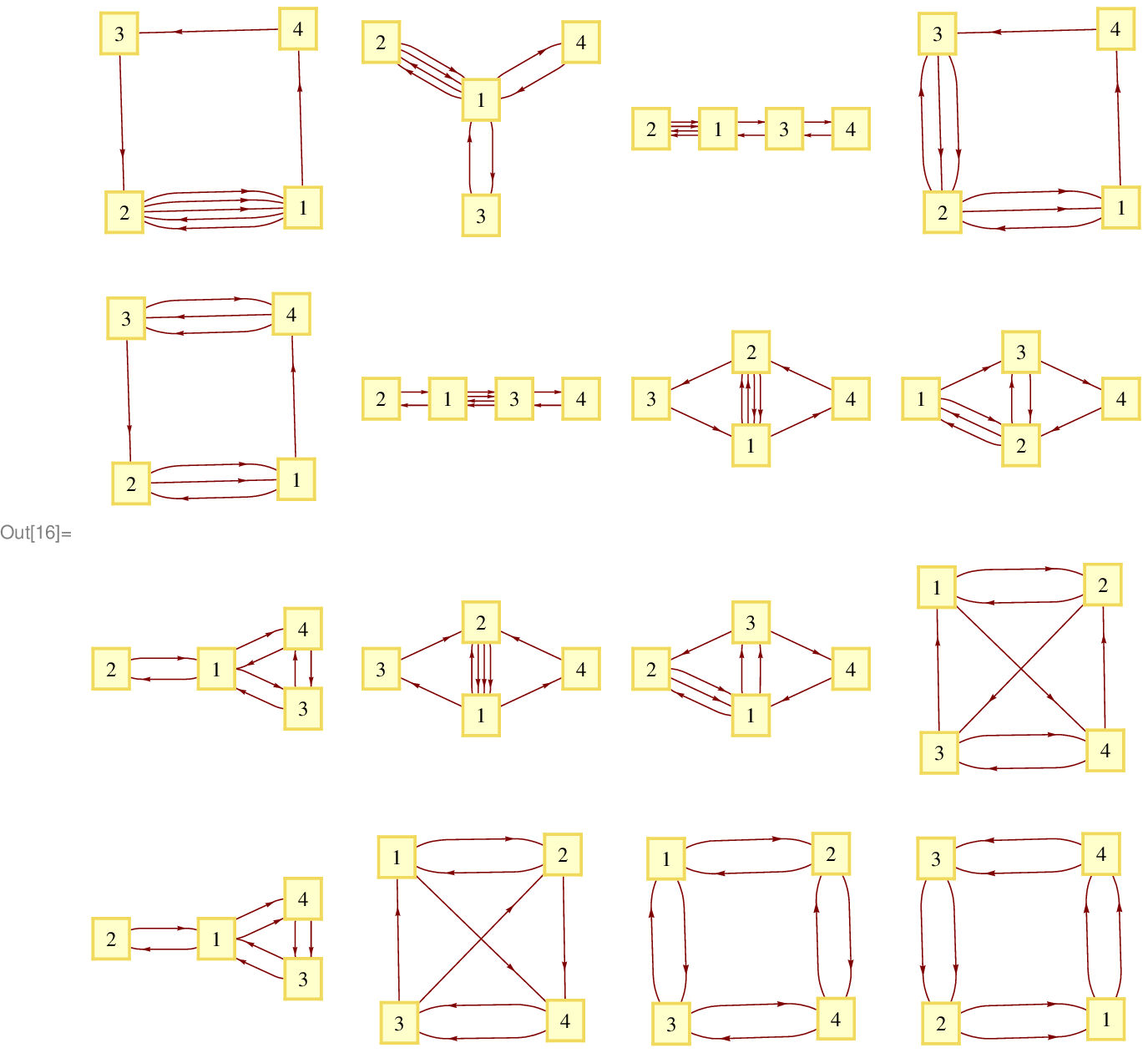}
\caption{{ \sf $(G,E)=(4,8)$ Quiver Diagrams, with no self adjoint loops.}}
\label{Qu48}
\end{figure}

\subsection{Assigning the Superpotential}
The above are simply quiver diagrams, providing any of them with a quiver representation gives a new physical theory with $E$ matter superfields and $E-x$ gauge groups. 
These classifications and the many others made by the algorithm for higher $E$ and $x$ therefore produce many new physical theories, the majority of which have not yet been studied and were previously unknown. It would certainly be interesting to study their VMS to identify the geometry.
Some, of course, have appeared in the literature in interesting and significant applications. The single $(2,2)$ quiver diagram produced by the program was first used as for the supersymmetric gauge theory associated to the conifold \cite{Klebanov:1998hh}. One of the $(3,6)$ quiver diagrams was one of the very early quiver theories, used as a model for a conformal supersymmetric gauge theory\cite{Douglas:1996sw}. These diagrams, among others, confirm the results found using the program and show that so far we have barely scratched the surface of the space of toric quiver theories.

To produce full TCYQTs we require toric superpotentials. 
The algorithm has found superpotentials for a large number of quiver diagrams. 
We again systematically work through the low order diagrams. 
The $x=2$ quivers require two terms in the superpotential. 
In the TCYQT case both the positive and negative terms contain all fields and so will cancel giving no toric superpotentials in the Abelian case of a single brane for any $x=2$ quiver diagram.
The superpotentials for the $(3,6)$ quivers are found to be identical to the example results in \ref{Sec:Theory}, as they should be. 
It is also found that the $(3,6)$ quiver diagram with green nodes in Figure \ref{Qu36} admits a single superpotential:
\begin{equation}
W = X_{12}^1X_{23}^1X_{31}^1+X_{12}^2X_{23}^2X_{31}^2 -X_{12}^1X_{23}^1X_{31}^1X_{12}^2X_{23}^2X_{31}^2 \ .
\end{equation}
The $(3,6)$ diagram with red nodes does not admit any superpotentials which do not vanish for the Abelian case.
Including the quivers with self-adjoint loops there are twelve $(3,6)$ TCYQTs in total.

\begin{figure}[!ht]
\centering
\includegraphics[trim = 5mm 5mm 5mm 5mm, clip, width=0.35\textwidth]{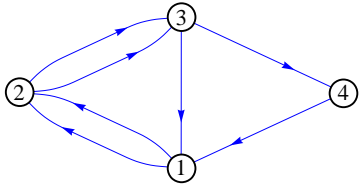}
\caption{{\sf An example of $(G,E)=(4,7)$ quiver diagram, for which we present all superpotentials in Equation \eqref{E7G4W}.}}
\label{Qu47ex}
\end{figure}
As a more detailed example we now stdy a single $(4,7)$ quiver diagram, presented in Figure \ref{Qu47ex}, which has no self-adjoint loops and admits multiple toric superpotentials:
\begin{eqnarray}
\nn
W_{1} &=&X_{12}^{(1)}X_{23}^{(2)}X_{31}^{(1)}+X_{12}^{(2)}X_{23}^{(1)}X_{34}^{(1)}X_{41}^{(1)}- T_{all} \\
\nn
W_{2} &=&X_{12}^{(1)}X_{23}^{(1)}X_{31}^{(1)}+X_{12}^{(2)}X_{23}^{(2)}X_{34}^{(1)}X_{41}^{(1)}- T_{all} \\
\nn
W_{3}&=&X_{12}^{(2)}X_{23}^{(1)}X_{31}^{(1)}+X_{12}^{(1)}X_{23}^{(2)}X_{34}^{(1)}X_{41}^{(1)}- T_{all} \\
\label{E7G4W}
W_{4}&=&X_{12}^{(2)}X_{23}^{(2)}X_{31}^{(1)}+X_{12}^{(1)}X_{23}^{(1)}X_{31}^{(2)}X_{41}^{(1)}- T_{all} \ ,
\end{eqnarray}
where $T_{all} := X_{12}^{(1)}X_{12}^{(2)}X_{21}^{(1)}X_{23}^{(1)}X_{31}^{(1)}X_{31}^{(2)}X_{41}^{(1)}$ is composed of a full Eulerian cycle.
These superpotentials together with the quiver diagram create all possible TCYQTs for this diagram, and so these theories encode all possible toric moduli spaces. 
Thus the program has produced a complete set of new physical theories for the above cases. Further complete sets of TCYQTs have also been produced for larger $x$ and $E$, giving many more quiver theories than previous known. 

\begin{figure}[!ht]
\centering
\includegraphics[trim = 1cm 0mm 0mm 0mm, clip, width=0.35\textwidth]{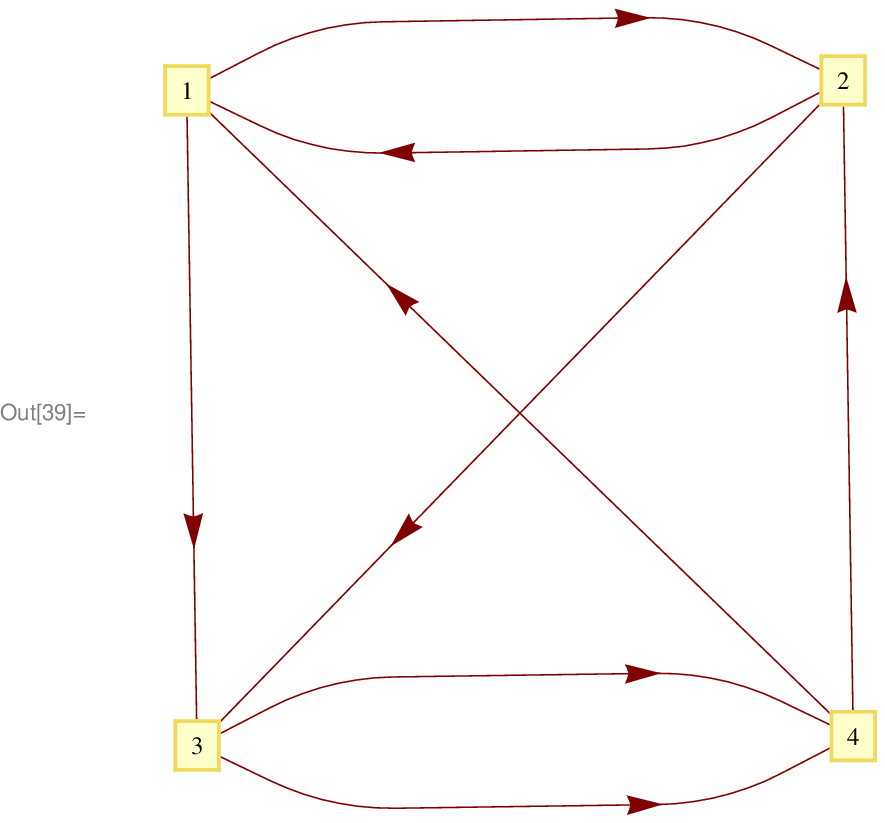}
\caption{{\sf An example of $(G,E)=(4,8)$ quiver diagram, for which we present all superpotentials in Equation \eqref{E8N4exW}.}}
\label{f:E8N4ex}
\end{figure}
As a final example, below are tables of the full sets of possible superpotentials for one of the sixteen $(4,8)$ diagrams above, drawn now in Figure \ref{f:E8N4ex}:
\begin{eqnarray}
\nn
W&=&
X_{12} X_{23} X_{34}^1 X_{41}-X_{13} X_{34}^2 X_{41}-X_{12} X_{21} X_{23} X_{34}^1 X_{42}+X_{1,3} X_{21} X_{34}^2 X_{42} \\ 
\nn
W &=&
 -X_{1,3} X_{34}^1 X_{41}+X_{12} X_{23} X_{34}^1 X_{41}+X_{1,3} X_{21} X_{34}^2 X_{42}-X_{12} X_{21} X_{23} X_{34}^2 X_{42} \\ \nn W &=&
 X_{12} X_{23} X_{34}^1 X_{41}-X_{12} X_{1,3} X_{21} X_{34}^2 X_{41}-X_{23} X_{34}^1 X_{42}+X_{1,3} X_{21} X_{34}^2 X_{42} \\ \nn W &=&
 -X_{12} X_{1,3} X_{21} X_{34}^1 X_{41}+X_{12} X_{23} X_{34}^1 X_{41}+X_{1,3} X_{21} X_{34}^2 X_{42}-X_{23} X_{34}^2 X_{42} \\ \nn W &=&
 X_{1,3} X_{34}^2 X_{41}-X_{12} X_{23} X_{34}^2 X_{41}-X_{1,3} X_{21} X_{34}^1 X_{42}+X_{12} X_{21} X_{23} X_{34}^1 X_{42} \\ \nn W &=&
 X_{1,3} X_{34}^2 X_{41}-X_{12} X_{1,3} X_{21} X_{34}^2 X_{41}+X_{12} X_{21} X_{23} X_{34}^1 X_{42}-X_{23} X_{34}^1 X_{42} \\ \nn W &=&
 -X_{12} X_{1,3} X_{21} X_{34}^1 X_{41}+X_{1,3} X_{34}^2 X_{41}+X_{12} X_{21} X_{23} X_{34}^1 X_{42}-X_{23} X_{34}^2 X_{42} \\ \nn W &=&
 -X_{1,3} X_{34}^1 X_{41}+X_{12} X_{23} X_{34}^2 X_{41}+X_{1,3} X_{21} X_{34}^1 X_{42}-X_{12} X_{21} X_{23} X_{34}^2 X_{42} \\ \nn W &=&
 -X_{12} X_{1,3} X_{21} X_{34}^2 X_{41}+X_{12} X_{23} X_{34}^2 X_{41}+X_{1,3} X_{21} X_{34}^1 X_{42}-X_{23} X_{34}^1 X_{42} \\ \nn W &=&
 -X_{12} X_{1,3} X_{21} X_{34}^1 X_{41}+X_{12} X_{23} X_{34}^2 X_{41}+X_{1,3} X_{21} X_{34}^1 X_{42}-X_{23} X_{34}^2 X_{42} \\ \nn W &=&
 X_{1,3} X_{34}^1 X_{41}-X_{12} X_{1,3} X_{21} X_{34}^2 X_{41}-X_{23} X_{34}^1 X_{42}+X_{12} X_{21} X_{23} X_{34}^2 X_{42} \\
\label{E8N4exW}
W &=&
 X_{1,3} X_{34}^1 X_{41}-X_{12} X_{1,3} X_{21} X_{34}^1 X_{41}+X_{12} X_{21} X_{23} X_{34}^2 X_{42}-X_{23} X_{34}^2 X_{42} \ .
\end{eqnarray}
We see that there are a total of 12 inequivalent possibilities.

\subsection{The Space of Quiver Theories}
\begin{table}[!htc]
\centering
\begin{tabular}[c]{l|lllllllll}
x & \multicolumn{8}{c}{E} \\
& 2 & 3 & 4 & 5 & 6 & 7 & 8 & 9 & 10 \\ \cline{2-10}
0 & 1/0 & 1/0 & 1/0 & 1/0 & 1/0 & 1/0 & 1/0 & 1/0 & 1/0 \\ \hline
1 & 0 & 2/0 & 2/0 & 3/0 & 3/0 & 3/0 & 4/0 & 4/0 & - \\ \hline
2 & & & 2/0 & 3/0 & 7/0 & 10/0 & 18/0 & 25/0 & 38/0 \\ \hline
3 & & & & 2/0 & 6/12 & 12/36 & 29/104 & 58/225 & 116/466 \\ \hline
4 & & & & & 3/0 & 8/47 & 28/330 & 72/1277 & 208/3941 \\ \hline
5 & & & & & & 3/0 & 12/0 & 44/645 & 172/4888 \\ \hline
6 & & & & & & & 4/- & 16/- & 84/- \\
\hline
\end{tabular}
\caption{{\sf Number of Quiver Diagrams/Quiver Theories with non-vanishing Abelian superpotential, plotted against $E$, the number of edges and $x = E-G$, the number of superpotential terms.}}
\label{Tb1}
\end{table} 

It is now expedient to summarise our results, which we plot in Table \ref{Tb1}, containing the total number of quiver diagrams and TCYQTs for a given $x = E-G$ and $E$.
It is seen that the lower bound on the edges for the possible existence of TCYQTs is obeyed by the numbers in the table, as well as the condition that $x \geq 3$ for there to exist a TCYQT having a non-vanishing Abelian superpotential.
To further confirm the results the numbers of $x=2$ quiver diagrams are compared with the generating function found in \cite{Hanany:2008gx}, which gives the number of base quiver diagrams that have at least one non-vanishing toric superpotential. The expanded function is:
\begin{equation}
g(x) = 1+2x+3x^2+6x^3+8x^4+13x^5 +17x^6+24x^7+\ldots \ ,
\end{equation}
with the number of $(N+2,N)$ quiver diagrams given by the coefficient of the $x^{(N-1)}$ term.

We see immediately that there seems to be a discrepancy with the $(4,6)$, $(5,7)$ terms and indeed the rest of the $x=2$ quivers. 
This is because \cite{Hanany:2008gx} does not restrict the fields to be complex valued and even for vanishing Abelian superpotential the gauge ranks could be promoted to $N>1$ and the two-term superpotential becomes non-vanishing.
This means that all diagrams in Figures \ref{Qu46} and \ref{Qu57} except the ones with red nodes admit a superpotential as the order of the fields in a superpotential term matters. 
The quiver diagrams with red nodes do not because an overall trace is taken on the superpotential, and the two terms in the superpotential correspond to the same Eulerian path and so are cyclically related. 
Therefore, one more quiver diagram is produced for the $(4,6)$ quivers and two more for the $(5,7)$ quivers and the results are then in perfect agreement. 
The number of quiver diagrams found must then be an upper bound on the generating function coefficient and we see that this holds for the new data. 

It is worthwhile to consider the trends of the numbers in Table \ref{Tb1}. 
Although there is limited data, there are a number of observations that can be made which provide evidence for the algorithm running correctly, and give more insight into the space of TCYQTs. 
The first is that large $x$ and small $E$ do not produce very many diagrams. 
This is fairly intuitive, as there are not many ways to place edges between a small number of nodes.
There are many more TCYQTs than quiver diagrams for high $E$ and $x$, but this is expected, as there should be many ways of decomposing a quiver diagram into loops, and consequently many term collections. 

One of the most interesting aspects is that whilst the number of diagrams increases monotonically with $E$ and $x$, the number of TCYQTs does not for low $E$. It appears the number of quiver diagrams increases faster with $E$, and the number of quiver TCYQTs increases faster with $x$, but it is hard to draw concrete conclusions from such a small data sample. 
However, we can at the very least say that the number of quivers becomes large quickly with $E$, and gets larger with $E$ faster for larger $x$.

As well as the production of the quiver theories, one of the main results of this paper is the algorithm itself. Each routine was tested individually\footnote{The testing strategy and more algorithmic details are detailed in Appendix \ref{App:AT}.}and that, together with the production of correct quiver diagrams and TCYQTs for low $x$ and $E$ provides sufficient evidence to be confident that the algorithm is robust.
By a rough complexity analysis, given in Appendix \ref{App:AT}, we find the algorithm has complexity $\sim O(E^{ax+b})$ for some positive constants $a$ and $b$, which for a given $x$ is polynomial in $E$. 
Examining the timing table in Appendix \ref{App:AT}, we see that this complexity roughly fits the data. 
Therefore it is feasible to use the program to produce quiver diagrams in a short time if the number of contractions is small, say $x \apprle 7$, and for large $x$ if time is not an issue. 

\setall
\section{Conclusions and Prospects}
\label{Sec:Con}
We have produced complete sets of toric quiver theories for low orders and a method to calculate all possible TCYQTs for higher orders. Many of the generated quivers have appeared in the literature for the first time, providing us with large number of new theories to study.

The idea of collapsing polygons to produce quivers and AB notation are new concepts. Moreover, the order parametre $x = E-G$, the number of terms in the superpotential, is also the number of contractions in this language.
 have allowed us to find a constructive solution to the $x=2$ problem, and, in general, an efficient algorithmic solution for arbitrary $x$. The AB notation is more efficient and more descriptive than the incidence matrix, in that it contains less redundant information, records the placement of self-adjoint loops and reflects the symmetries of the quiver diagram.

We have shown that the toric condition has a combinatoric interpretation of splitting the graph into disjoint loops, which we have exploited to generate superpotentials by constructing all sets of disjoint loops. It is easy to see from this that all toric spaces are Calabi-Yau spaces, as all Eulerian graphs can be split into disjoint loops, the proof of which we again leave to Appendix \ref{App:QP}.
The solution suggests that it is a better idea to characterise the order of quivers not by their edges and nodes $(E,G)$ as was in \cite{Hanany:2008gx}, but, rather, by the number of edges and the number of contractions $(E,x)$, because it is $x$ not and $G$ which determines the number of elements in the AB.

The computational complexity of the algorithm is roughly $O(E^{ax+b})$ for some positive constants $a$ and $b$, which is much better than the brute force case. 
It is unlikely that there exists an algorithm which has purely polynomial complexity in both $E$ and $x$, though it may be possible to improve the algorithm by using one of the more complicated graph isomorphism tests available.
Our plenitude of data can be used to study the space of general toric quiver theories more extensively than what has been possible before. 
As the algorithm is exhaustive, this data can be analysed systematically.
Producing all these types of quiver theories would involve expanding the algorithm to take matrix valued fields into account when producing superpotentials.

The toric quiver theories have been produced specifically so the `forward algorithm'\cite{Feng:2000mi} can be applied in order to scan through lower order theories, giving us a new method to study the duality manifested in the AdS/CFT correspondance. 
In particular, we should go through all the quivers together with their superpotentials and start listing the VMS by explicitly solving for the D-term and F-term equations.
The result should be all possible, stepwise, toric Calabi-Yau threefolds.
Likewise, we could add Chern-Simons labels to the nodes and follow the extended forward algorithm of \cite{Hanany:2008gx}. The results should be all toric Calabi-Yau fourfolds.
This is of tremendous interest because so far we do not have an inverse geometrical engineering method of constructing M2-brane world-volume gauge theories and exploring this space of theories under an algorithmic light should provide new insights and dualities.


\section*{Acknowledgements}
This work is based on the Masters in Philosophy project by J.~H.~under the supervision of Y.-H.~H.~and J.~H.~is grateful to University College, Oxford as well as the BP Thesis Prize of the Physics Department at the University of Oxford awarded for this project.
Y.-H.~H.~is dutifully indebted to an Advanced Fellowship from the STFC, UK, as well as the gracious patronage of Merton College, Oxford through the FitzJames Fellowship.

\bibliographystyle{unsrt}

\appendix
\appendixpage

\section{Some Technical Results on Quiver Constructions}
\label{App:QP}
In this appendix we collect some technical results which are used in the paper.
\subsection{Shortest Distance and the Adjacency Matrix}
There is a relation betwen shortest distances between two nodes and the adjacency matrix of the graph.
Consider the $n^{th}$ power of the adjacency matrix $A = a_{i}^{j}$. Using summantion notation, We see that $A^2 = a_{i}^{k}a_{k}^{j}$ and so 
\begin{equation}
A^n = a_{i}^{k_{1}}a_{k_{1}}^{k_{2}}...a_{k_{n-1}}^{k_{n}}a_{k_{n}}^{j} = (a_{i}^{j})^n \ .
\end{equation}
So, since the adjacency matrix is one if there is an edge between $i$ and $j$, and zero otherwise. Then $ (a_{i}^{j})^n $ is only non-zero if there is some path that goes from $i$ to $j$ in $n$ steps. As we require all the $a_{k_{p}}^{k_{p+1}}$ to be non-zero, simultaneously, in the sum. So there must be a path from $i$ to $j$ of $n$ steps in length if this element of $A^{n} \neq 0$. The first time this is non-zero will be the shortest path between $i$ and $j$, as this multiplication tries all routes. 

\subsection{A Quiver Admits an Eulerian Cycle}
As stated in the main text, every toric quiver admits an Eulerian cycle.
Here we present the proof, due originally to Fleury in \cite{hartmann-2005}.
\begin{proof}
Consider a $(E,G)$ quiver diagram, where $E-G = x$. 
Pick any node in the diagram. 
Now travel from this node, through the graph, never using any edge more than once. 
Every node we arrive at, we can then leave, apart from the node we started at. 
So we must end up back at the start, and have created a loop in the graph. 
Now remove that loop from the graph. The remaining graph is still a quiver, and we can perform the same operation. Continuing in this manner we can include every edge in some loop in the graph. 
Now reconstruct the whole graph, remembering the loops. Starting from the intial node, if we come to a node where another loop begins, we travel around it. In this way we travel through every edge once, and form an Eulerian cycle on the quiver. 
\end{proof}

\subsection{Cyclic Permutation of A and B}
There is a permutation symmetry on the A and B maps on the quiver.
Consider an $(E,x)$ quiver $A=\{a_{1}...a_{x}\},B=\{b_{1}...b_{x}\}$, and rotating this quiver to the next node which a contraction starts at. The first contraction is then $a_{2}$, and then next step, $b_{2}$, and so on until we reach $a_{x}$. Then the previous last step $b_{x}$, steps to the first contraction $a_{1}$ and $b_{1}$ steps to $a_{2}$. So a rotation of the quiver simply amounts to an identical cyclic permutation applied to both the A and B. 

\subsection{Mirror Symmetry of AB}
The mirror symmetry of a quiver is also reasonably nicely manifested. To construct the mirror image of a quiver, and remembering that we always start on a contracted node, we first contract $a_{1}$. Then we travel $b_{x}+a_{1}$ steps to the next contraction. We then contract $E-a_{x}$, and travel $b_{x-1}+a_{x}$ to the next contraction, then contract $E-a_{x-1}$ and so on. This gives a nice condition on the $a$'s. That the sum of the $a$'s is less than $Ex/2$, if we allow the $b$'s to take any value. Because of this allowance on the $b$'s the AB can 'wrap around' the quiver more than once which is why we cannot use this and the moduli rules together.

\subsection{Stopping Redundancy}
Setting $b \geq 1$ stops a redundancy when coupled with the moduli rules as in order to produce a redundant a then during the walk to collapse the polygon, we must arrive at a loop already formed, and then apply $C(a)$ where $a$ is the number of edges in the loop. If $b \geq 1$ we can never arrive at such a loop, as we can only create loops where we are, and then if we move forward once we cannot go back to the beginning of the loop. However, this doesn't stop us from making any possible quiver. As this c-node is now in front of us too, so we can contract any node in front of us with it, except for itself. In other words, this condition removes the redundant possibility whilst keeping all the ones that are not.

\section{The Algorithm}
\label{App:AT}

This appendix contains a brief set of documentation explaining how to test the algorithm. 
It also contains details of the implementation of the algorithm in C. 
The list below splits the program into its three sections, detailed in the flowchart in Figure \ref{QTflow}. 
It contains the information on what each of the implemented routines do, and how they were tested. 
Following this, results are shown for the runtime of the algorithm and the calculation of the approximate complexity algorithm is also shown.

\subsection{Implementation}
The isomorphism test works by creating a `path' object which walks through the quiver. 
First we generate a quiver diagram from the AB we are testing. 
We walk all possible Eulerian cycles of the quiver to find all possible AB. 
The walker starts at a c-node in the diagram, and then starts walking, every time the walker has a choice of where to go, it recognises it is at a c-node and so a contraction must begin from this node, it starts then counting from this node until it returns to it, at which point it has a value for that contraction. 
After leaving a c-node, the walker counts how many edges it walks until it reaches the next c-node. 
This is a B value. The program walks all paths simultaneously and so, when it reaches a c-node, where there is a choice, the walker copies itself, and walks both paths. 
This creates a web of Eulerian cycles across the quiver, each with a certain AB value. 
We then pick the smallest AB of these and compare it to the AB we used to generate the quiver. If it is the same, then we keep the AB, if it is smaller, we discard our tested AB. In this way, we find a representative of each graph isomorphism class and so generate all the quiver diagrams we require.

To output the data, a routine for converting the AB into an incidence matrix is needed. An incidence matrix for the original polygon is created, then a list of contracted $(i,j)$ is made. 
This array is then used to sum the correct rows from the polygon incidence matrix, equivalent to contracting the set of nodes stored in the row. We can then form the incidence matrix by taking each of these rows. The incidence matrix is the ouput to a file, in a format suitable for the toric analysis of the graphs.

To produce all the quiver diagrams, without the superpotential generation, a stripped down version of the program is used. 
This version does not contain the superpotential generation routine and has a very slightly different structure. 
As the superpotentials are not calculated, the self-adjoint loops become irrelevant, as they can go anywhere, and do not add any useful data to the diagram. 
So, a routine which checks for self-adjoint loops is used, which decreases the number of AB which have to be checked for isomorphism. This check occurs when the program creates an adjacency list, as it is easy to see if the loops are formed, as a node will be contracted with one next to it in the array.

To generate the stripped down diagrams, the ones missing self-adjoint loops, the algorithm is called again, but for $(N,E-1)$, then for $(N,E-2)$, until $E = N$. This has the added benefit of producing all the quivers with the same $N$ down to $(N,N)$ for free.
\subsection{Break of Sub-routines in Quiver Generation}
\label{App:alg}
The routines break down as follows:
\begin{enumerate}
\item Counting section:
   \begin{enumerate}
   \item Main - Accepts intial variables and sets up the problem, tested by printing the input variables to the screen, and making sure that the routine calls the following ones.
   \item Generate Quivers - Sets initial A and starts counting, tested by printing the intial A and checking that the final A is the first one which overflows.
   \item Aperm - Counts through A's, tested by printing each A to the screen and checking that they are being counted through correctly.
   \item Bperm - Counts through B's, tested by printing each B to the screen and checking that they are being counted through correctly. Due to the increased complexity of counting through the B, due to the varying end points for the different b's, each overflow was checked by hand for the small E and x. Then, a testing routing was used to check that the AB never broke the moduli rules for the other higher order AB.
   \item Incidencegen - Calls the isomorphism checking routine for the quiver, produces incidence matrix, and outputs to file, calls superpotential generation routine. Tested by checking the correct incidence matrix was produced for low complexity quivers, and it's easy to look in the file to check that the output is correct. 
   \item aadder - Handling routine, increments A, tested by printing the A before and after incrementation.
   \item badder - Handling routine, increments the first b of B. Tested by printing the B before and after.
   \item recurseadd - Handling routine, increments B, tested by printing the B to the screen.
\end{enumerate}

\item Isomorphism check section
\begin{enumerate}
   \item Finalcheck - Creates all possible paths, steps the paths through the graph by calling the stepping routing and calls the comparison routine. Tested by printing all the created paths to the screen, by printing that a path has been stepped to the screen and printing the result of the comparison.
   \item adjmaker - Creates the adjacency list for the given AB, tested by printing the adjacency list to the screen and comparing it to the corresponding graph in AB notation. 
   \item disconnectcheck - Checks to see if a path has made the graph disconnected, tested by printing the path, and the result of the disconnect check, and making sure that it was only giving the positive result for actually disconnected paths.
   \item pathstepper - Steps a path along to the next adjacent nodes, creates more paths if the node is multiply connected, tested by printing the path to the screen every step, and checking that the steps are working correctly.
   \item pathcompare - Compares all produced paths to the given AB, tested by printing the whole list of paths, to check their AB and comparing them by hand to the given AB.
   \item deletepath - Deletes a path from the pathlist, tested by printing the list before and after the path has been deleted.
   \item duplicatepath - Duplicates a path and puts it in the list before the path that was duplicated,tested by printing the list of paths before and after duplication.
   \item initialisepath - Initialises all the variables in an empty path, tested by printing the path after initialisation.
   \item quicksort - Handling routing - Sorts a list of data, tested by printing the list to be sorted before and after.
   \end{enumerate}

\item Superpotential Generation Section
   \begin{enumerate}
   \item Constructsuperpotentials - Calls generateterms, and constructs superpotentials from the produced term collections. Tested by printing all term collections and putting them together by hand to check they match up with the ones the program produced.
   \item cancelcheck - Checks to see if two term collections contain any like terms, tested by printing the list and making sure all term collections are different.
   \item printsuper - Prints the superpotential to a file, tested by reading the file.
   \item assignmentadder - Changes the edge assignments in a systematic way, tested by checking that all assignments are tested, by working out what they should be by hand, and printing the program generated ones to the screen.
   \item generateterms - Generates all the terms for a given edge assignment and adds them to the list of term collections, tested by taking some edge assignments, and running through the loops by hand, then comparing them to the term collections printed on the screen.
   \item identcheck - Checks whether any two term collections are the same, tested in the same way as cancelcheck.
   \item deletecol - Deletes a term collection from the list, tested by checking the pointer to the term collection is made null after the term collection is deleted.
   \item termsort - Takes a term collection and organises it so the numbers of fields, and the entries in a term are in order, tested by printing the terms before and after and checking that they are put in order.
   \item tquick - Does the sorting of the terms, tested in the course of testing termsort.
   \item fquick - Does the sorting of the fields, tested in the course of testing termsort.
   \end{enumerate}
\end{enumerate}

Each routine was tested to check that it produces the output that it is supposed to, and routines were tested in such a way that any dependencies were tested before the routine which was dependant on them. 
Testing the counting section was a simple matter, as that had to be done was to check that the program was indeed counting through all of the potential quivers, which we could look at the screen and see. The isomorphism check section was tested by dumping the contents of given path structures to the screen, so we could check that the adjacency list and pathstepper were working properly. The disconnectcheck was tested by following paths which became disconnected and checking that the checker picked them up. Superpotential generation was tested by dumping the contents of given term collections to the screen and counting the number of edge assignments that should be counted through.

Obviously the ideas behind the algorithm work, and we have proven this earlier on in the main text. As a test, we applied the algorithm, by hand, to a number of sets of low order quivers, and compared the results with the ones the program gave. The results it gave were indeed the same. 
However, it is hard to probe any further as it becomes incredibly laborious, and a person is more likely to make mistakes than a computer is at that stage. 
The algorithm itself however is quite robust and quite modular. 

We were also able to compare the results of the program in the $x=2$ case to the direct implementation of the solution to the warm-up problem and check that both programs produce the same results, which they do. 
To give a better idea of exactly how some of the routines work, the isomorphism test and output routine are covered in more detail below.

\subsection{Computational Results}
To analyse the efficiency of the program, we look at the below table, which records the time taken for the program to run. 
We see that the program is ``fast'' in $E$, and ``slow'' in x. 
This fits with our analysis of the complexity of the algorithm below.

\begin{table}[h!t]
\caption{Table showing time (in seconds) for Quiver Diagrams/Quiver Theories generation}
\centering
\begin{tabular}[c]{l|lllll}
E & \multicolumn{5}{c}{x} \\
& 2 & 3 & 4 & 5 & 6\\ \cline{2-6}
4 & 1/- & & & & \\ \hline
5 & 2/- & 1/1 & & & \\ \hline
6 & 2/- & 2/5 & 4/4 & & \\ \hline
7 & 2/- & 2/5 & 2/7 & 11/34 & \\ \hline
8 & 3/- & 3/6 & 7/25 & 9/151 & 15/- \\ \hline
9 & 3/- & 8/15 & 27/117 & 63/908 & 16/- \\ \hline
10 & 4/-&24/45 & 147/571 & 835/6690 & 1895/- \\ \hline
\end{tabular}
\end{table} 

Simple rules to note are that loops are polynomial in the number of entries in the loop, and that counting through numbers is exponential in the number of digits. 
This is a very rough analysis and should just give an idea of how efficient this algorithm is, i.e., whether it is fast or slow. 

We see that the counting section of the program is exponential in its complexity in $x$. 
As it counts through quivers the order of magnitude of the number of quivers it has to count through is roughly $O(E^x)$. 
There is, as far as we can see, no way around this as every quiver must be checked. 
The isomorphism checker has complexity $O(E2^{x})$\footnote{As discussed in \cite{GIsoDisease}. This seemingly amounts to a brute force method, however, because we only start paths on the c-nodes the method does in fact scale polynomially with $E$, as we have $2^{x}$ paths maximum, which we step $E$ times. Hence the number of steps is $E2^{x}$}, as has already been discussed.
The superpotential generator has to count through all assignments and generate the terms for each one. Term generation is polynomial time in $E$ as it simply loops through all the edges and records the term collections. However, the assignments are again exponential in $x$, so we get a complexity like $O(E^{ax+b})$. 
\end{document}